      \theoremstyle{plain}
      \newtheorem{theorem}{Theorem}[section]
      \newtheorem{lemma}[theorem]{Lemma}
      \newtheorem{corollary}[theorem]{Corollary}
      \newtheorem{Proposition}[theorem]{Proposition}
      \theoremstyle{definition}
      \newtheorem{definition}[theorem]{Definition}
      \theoremstyle{remark}
      \newtheorem{remark}[theorem]{Remark}
      \numberwithin{equation}{section}
      \def\@setcopyright{}
      \def\serieslogo@{}
\begin{document}

   \author{V. Maume-Deschamps}
   \address{Université de Lyon, Université Lyon 1,France, Institut Camille Jordan UMR 5208}
   \email{veronique.maume@univ-lyon1.fr}


   \author{D. Rulli\`ere}

   \address{Université de Lyon, Université Lyon 1,France\\
    Laboratoire SAF EA 2429}

   \email{didier.rulliere@univ-lyon1.fr}
   \author{K. Said}
      \address{Université de Lyon, Université Lyon 2,France, Laboratoires SAF EA 2429 \& COACTIS EA 4161 }
      \email{khalil.said@univ-lyon2.fr}
   

   \title[The Infinite]{A risk management approach to capital allocation}

   \begin{abstract}
   The European insurance sector will soon be faced with the application of Solvency 2 regulation norms. It will create a real change in risk management practices. The ORSA approach of the second pillar makes the capital allocation an important exercise for all insurers and specially for groups. Considering multi-branches firms, capital allocation has to be based on a multivariate risk modeling. Several allocation methods are present in the literature and insurers practices. In this paper, we present a new risk allocation method, we study its coherence using an axiomatic approach, and we try to define what the best allocation choice for an insurance group is.
   \end{abstract}										
   \keywords{Multivariate risk indicators, Solvency 2, Solvency Capital Requirement SCR, Own Risk and Solvency Assessment ORSA, dependence modeling, coherence properties, risk theory, optimal capital allocation.}
   \subjclass[2010]{62H00,
   62P05,
   91B30
   }
   \date{\today}
   \maketitle
   \section*{Introduction}
       Solvency 2 standards will make a radical change in risk management practices in the actuarial sector. They are based on a strengthening of risk control and minimization of a ruin probability. The determination of the economic regulatory capital will be faced under this prudential mechanism with a kind of methodological revolution. A choices of a dependence model between different risks and of an aggregation methodology are both required. In the prescribed \textit{Standard Formula}, risks aggregation is done using correlation matrices that connect families and subfamilies of risks. Once the \textit{Solvency Capital Requirement (SCR)} is calculated, its allocation between the different risky activity branches becomes the new operational challenge.\\
       
        Capital allocation is an internal exercise, certainly not controlled by the first pillar of Solvency~2, but it plays a crucial role in determining performance of all the insurer activity. The case of insurance groups requires special treatment in the context of the \textit{Own Risk and Solvency Assessment (ORSA)} approach. In this context, a multivariate analysis of risk seems relevant. \\
        
        The issue of capital allocation in a multivariate context arises from the presence of dependence between the various risky activities which may generate a diversification effect. Several allocation methods in the literature are based on a choice of a univariate risk measure and an allocation principle. Others are based on optimizing a multivariate ruin probability or some multivariate risk indicators. In this paper, we focus on the allocation technique by minimizing some risk indicators.\\
        
     The literature on the subject of capital allocation methods is very rich. Several principles have been proposed over the last twenty years. The most important and most studied are the Shapley method, the Aumann-Shapley method and the Euler's method.\\ The Shapley method is based on cooperative game theory. It is described in detail in Denault's paper (2001) \cite{Denault2001}. Denault proved that this method, originally used to allocate the total cost between players in coalitional games context, can be easily adapted to solve the problem of the overall risk allocation between segments.\\
   Tasche devoted two papers \cite{Euler2} and \cite{Euler} to describe Euler's method. Euler's method is also found in the literature under the name of gradient method. It is based on the idea of allocating capital according to the infinitesimal marginal impact of each risk. This impact corresponds to the increase obtained on the overall risk, yielding an infinitely small increment in a marginal risk. Euler's method is very present in the literature. Several papers analyze its properties (RORAC compatibility \cite{RORAC}, \cite{RAROC1}, Coherence \cite{Alloc2},...) under different assumptions (Tasche (2004) \cite{Euler3}, Balog (2011) \cite{Euler4}). Its fame is due to the existence of economic arguments that can justify its use to develop allocation rules.\\
   Finally, Aumann-Shapley method is a continuous generalization of Shapley method. Its principle is based on the value introduced by Aumann and Shaplay in game theory. Denault \cite{Denault2001} analyzes this method and its application to capital allocation.\\
   
   These three capital allocation principles rely on different risk measures. The coherence of the allocation method depends on the properties of the selected risk measure. Several papers deal with capital allocation coherence based on the properties of the risk measure used. We quote as examples, Fischer (2003) \cite{Fischer2003}, Bush and Dorfleitner (2008) \cite{Alloc2}, and Kalkbrener (2009) \cite{coherence2}.\\
   
   Other techniques have been proposed more recently for building optimal allocation methods, by minimizing some multivariate ruin probabilities, especially those defined by Cai and Li (2007) \cite{Cai}, or by minimizing some new multivariate risk indicators. In this context, Cénac et al. \cite{AR1}, \cite{AR2} defined three types of indicators, which take into account both the ruin severity at the branch level, and the impact of the dependence structure on this local severity. In the one-period case, these indicators can be considered as special cases of a general indicator family introduced in Dhaene et al. (2012) \cite{AR3}. Allocation by minimizing these indicators was studied in bivariate dimension by Cénac et al. (2014) \cite{AR1}, and in higher dimension by Maume-Deshamps et al. (2015) \cite{P22015}. In \cite{P22015} we study its behavior and its asymptotic behavior for some special distributions families. We study also in the same paper, the impact of dependence on the allocation composition. In the present paper we focus on the coherence properties of this kind of allocation methods.\\

    Allocation by minimization of some real multivariate risk indicators can be used in a more general framework for modeling systemic risk. In reinsurance, it can also be used to find optimal stop-loss treaties in some special cases. This allocation technique can also help to measure the performance of calculating the groups' capital requirement in the Swiss Solvency Test (SST), which provides a consistent framework both for legal branches and group solvency capital requirement.\\
     
     The article is organized as follows. In the first section, we present the optimal allocation method by minimizing multivariate risk indicators. Using an axiomatic approach, we define in section 2 some coherence properties for allocation methods in multivariate context. The third section is devoted to the study of the coherence of the optimal allocation. Section 4 is a discussion about the best allocation method choice for an insurance group. 
       
   \section{Optimal allocation presentation}
 
     In a multivariate risk framework, we consider a vectorial risk process $X^p=(X_1^p,\ldots,X_d^p)$, where $X_k^p$ corresponds to the losses  of the $k^{th}$ business line during the $p^{th}$ period.  We denote by  $R_k^p$ the reserve of the $k^{th}$ line at time $p$, so: $R_k^p=u_k-\displaystyle\sum_{l=1}^{p}X_k^l$, where $u_k\in\mathbb{R}^+$ is the initial capital of the $k^{th}$ business line. $u=u_1+\cdots+u_d$ is the initial capital of the group and $d$ is the number of business lines.\\
     
   Cénac et al. (2012) \cite{AR2} defined the two following multivariate risk indicators, for $d$ risks and $n$~periods, given penalty functions $g_k, ~k\in\{1,\ldots,d\}$ :
   \begin{itemize}
   \item the indicator $I$: \[ \mathit{I}(u_1,\ldots,u_d)=\sum_{k=1}^{d}{\mathbb{E}\left(\sum_{p=1}^{n}{g_k(R^p_k)1\!\!1_{\{R^p_k<0\}}1\!\!1_{\{\sum_{j=1}^{d}R^p_j>0\}}}\right)} \/,\]
   \item the indicator $J$: 
    \[ \mathit{J}(u_1,\ldots,u_d)=\sum_{k=1}^{d}{\mathbb{E}\left(\sum_{p=1}^{n}{g_k(R^p_k)1\!\!1_{\{R^p_k<0\}}1\!\!1_{\{\sum_{j=1}^{d}R^p_j<0\}}}\right)} \/,\]
   \end{itemize}
   $g_k: \mathbb{R}^-\rightarrow \mathbb{R}^+$ are $C^1$, convex functions with $g_k(0)=0\/, ~g_k(x)\geq 0$ for $x < 0,~k=1,\ldots,d$. They aggregate the cost that each branch has to pay when it becomes insolvent while the group is solvent for the $I$ indicator, or while the group is also insolvent in the case of the $J$ indicator.\\
   They proposed to allocate some capital $u$ by minimizing these indicators. The idea is to find an allocation vector $(u_1,\ldots,u_d)$ that minimizes the indicator such as $u=u_1+\cdots+u_d$, where $u$ is the initial capital that need to be shared among all branches.\\
   
   The indicator $I$ represents the expected sum of penalty amounts of local ruins, knowing that the group remains solvent. In the case of the indicator $J$, the local ruin severities are taken into account only in the case of group insolvency.\\
   
  By using optimization stochastic algorithms, we may estimate the minimum of these risk indicators. Cénac et al. (2012) \cite{AR2} propose a Kiefer-Wolfowitz version of the mirror algorithm as a convergent algorithm under general assumptions to find an optimal allocation minimizing the indicator $I$. This algorithm is effective to solve the optimal allocation problem, especially, for a large number of business lines, and for allocation over several periods.\\
   
   \subsection{Definitions and notations}
   Since new regulation rules, such as Solvency 2, require only a justified allocation over a period of one year, we focus in this paper on the case of  allocations on a single period ($n=1$). Another goal of this choice is to present a first computational approach. An annual allocation seems to be a more efficient decision for an insurer; during a year of operation, it will allow him to integrate the changes that occurred in his risk portfolio and its dependence structure.\\
   
   The following notations are used:
   \begin{itemize} 
   \item[$\square$] $u$ is the initial capital of the firm.
   \item[$\square$]$\mathcal{U}^d_u=\{v=(v_1, \ldots, v_d)\in[0,u]^d, \sum_{i=1}^{d}v_i=u\}$ is the set of possible allocations of the initial capital $u$. 
   \item[$\square$] For all $i\in \{1,\ldots,d\}$ let $\alpha_i=\frac{u_i}{u}$, then, $\sum_{i=1}^{d}\alpha_i=1$ if $(u_1,\ldots,u_d)\in\mathcal{U}^d_u$.
   \item[$\square$]$1\!\!1_u^d=\{\alpha=(\alpha_1, \ldots, \alpha_d)\in[0,1]^d, \sum_{i=1}^{d}\alpha_i=1\}$ is the set of possible allocation percentages $\alpha_i=u_i/u$. 
   \item[$\square$] The risk $X_k$ corresponds to the losses of the $k^{th}$ branch during one period. It is a positive random variable in our context.
   \item[$\square$] For $(u_1,\ldots,u_d)\in\mathcal{U}^d_u$, we define the reserve of the $k^{th}$ business line at the end of the period is: $R^k=u_k-X_k$, where $u_k$ represents the part of capital allocated to the $k^{th}$ branch.
   \item[$\square$] The aggregate sum of risks is:  $ S=\sum_{i=1}^{d}{X_i} $, and let  $ S^{-i}=\sum_{j=1;j\neq i}^{d}{X_j} $ for all $i\in \{1,\ldots,d\}$.
   \item[$\square$] $F_Z$ is the cumulative distribution function of a random variable $Z$, $\bar{F}_Z$ is its survival function and $f_Z$ its probability density function. 
       \end{itemize}
       \begin{definition}[Optimal allocation]
          Let $\mathbf{X}$ be a positive random vector of $\mathbb{R}^d$, $u \in \mathbb{R}^+$ and $\mathcal{K}_\mathbf{X}:\mathcal{U}^d_u\rightarrow \mathbb{R}^+$ a multivariate risk indicator associated to $\mathbf{X}$ and $u$. An optimal allocation of the capital $u$ for the risk vector $\mathbf{X}$ is defined by:
          \[ (u_1,\ldots,u_d)\in\underset{(v_1,\ldots,v_d)\in\mathcal{U}^d_u}{\arg\inf}\left\{ \mathcal{K}_\mathbf{X}(v_1,\ldots,v_d)\right\}\/. \] 
          \end{definition}
          
          For risk indicators of the form $\mathcal{K}_\mathbf{X}(v)=\mathbb{E}[S(\mathbf{X},\mathbf{v})]$, for a scoring function $S:{\mathbb{R}^{+}}^d\times{\mathbb{R}^{+}}^d\rightarrow\mathbb{R}^{+}$, this definition can be seen as an extension in a multivariate framework of the elicitability concept. Elicitability has been introduced by Gneiting (2011)\cite{Gneiting2011}, and studied recently for univariate risk measures, by Bellini and Bignozzi (2013)\cite{Bellini2013}, Ziegel (2014)\cite{Ziegel2014} and Steinwart et al. (2014)\cite{Cachan2014}, for examples.\\
 
  \paragraph{\textbf{Assumptions}}
   Throughout this paper, we will use the following assumptions:
   \begin{description}
   \item[H1] The risk indicator $\mathcal{K}_\mathbf{X}$ admits a unique minimum in $\mathcal{U}^d_u$. In this case, we denote by $ A_ {X_1,\ldots, X_d} (u) = (u_1,\ldots, u_d) $ the optimal allocation of the amount $u$ on the $d$ risky branches in $\mathcal{U}^d_u$.  
   \item[H2]The functions $g_k$ are differentiable and such that for all $k\in\{1,\ldots,d\}$, $g^\prime_k(u_k-X_k)$ admits a moment of order one, and $(X_k,S)$ has a joint density distribution denoted by $f_{(X_k,S)}$. 
   \item[H3] The $d$ risks have the same penalty function $g_k=g, \forall k\in \{1,\ldots,d\}$.
   \end{description}
   The first assumption is verified when the indicator is strictly convex, this is particularly true when for at least one $k\in\{1,\ldots,d\}$, $g_k$ is strictly convex; and the joint density $f_{(X_k,S)}$ support contains $[0,u]^2$ (see \cite{AR2}).
   \subsection{Optimality conditions}
       In this section, we focus on the optimality condition for the indicators $I$ and $J$.\\
        For an initial capital $u$, and an optimal allocation minimizing the multivariate risk indicator $I$, we seek $u^*\in \mathbb{R}_+^d$ such that:
     \[\mathit{I}(u^*)=\underset{v_1+\cdots+v_d=u}{\inf}\mathit{I}(v), ~~v\in \mathbb{R}_+^d  \/.\]
      Under assumption H2, the risk indicators $I$ and $J$ are differentiable, and in this case, we may calculate the following gradients: 
           \begin{align*}
           (\nabla I(v))_i&=\sum_{k=1}^{d}\int_{v_k}^{+\infty}g_k(v_k-x)f_{X_k,S}(x,u)dx
           + \mathbb{E}[g^\prime_i(v_i-X_i)1\!\!1_{\{X_i>v_i\}}1\!\!1_{\{S\leq u\}}]\\
           \mbox{and,}~~ (\nabla J(v))_i&=\sum_{k=1}^{d}\int_{v_k}^{+\infty}g_k(v_k-x)f_{X_k,S}(x,u)dx
                 + \mathbb{E}[g^\prime_i(v_i-X_i)1\!\!1_{\{X_i>v_i\}}1\!\!1_{\{S\geq u\}}]\/.
           \end{align*} 
     Under H1 and H2, using the Lagrange multipliers method, we obtain an optimality condition verified by the unique solution to this optimization problem:
     \begin{equation}
     \mathbb{E}[g^\prime_i(u_i-X_i)1\!\!1_{\{X_i>u_i\}}1\!\!1_{\{S\leq u\}}]=\mathbb{E}[g^\prime_i(u_j-X_j)1\!\!1_{\{X_j>u_j\}}1\!\!1_{\{S\leq u\}}],~~\forall j\in\{1,\ldots,d\}^2
     \label{Optcg}\/.
     \end{equation}
     A natural choice for penalty functions is the ruin severity: $g_k(x)=|x|$. In that case, and if the joint density $f_{(X_k,S)}$ support contains $[0,u]^2$, for at least one $k\in\{1,\ldots,d\}$, our optimization problem  has a unique solution.\\ 
    We may write the indicators as follows:
   \begin{align*}
   \mathit{I}(u_1,\ldots,u_d)&=\sum_{k=1}^{d}{\mathbb{E}\left({\arrowvert R^k \arrowvert1\!\!1_{\{R^k<0\}}1\!\!1_{\{\sum_{i=1}^{d}R^i\geq0\}}}\right)}\\
   &=\sum_{k=1}^{d}{\mathbb{E}\left({(X_k-u_k)1\!\!1_{\{X_k>u_k\}}1\!\!1_{\{\sum_{i=1}^{d}X_i\leq u\}}}\right)}=\sum_{k=1}^{d}{\mathbb{E}\left({(X_k-u_k)^+1\!\!1_{\{S\leq u\}}}\right)}\/,
   \end{align*}
   and,
   \begin{align*}
   \mathit{J}(u_1,\ldots,u_d)&=\sum_{k=1}^{d}{\mathbb{E}\left({\arrowvert R^k \arrowvert1\!\!1_{\{R^k<0\}}1\!\!1_{\{\sum_{i=1}^{d}R^i\leq0\}}}\right)}\\
   &=\sum_{k=1}^{d}{\mathbb{E}\left({(X_k-u_k)1\!\!1_{\{X_k>u_k\}}1\!\!1_{\{\sum_{i=1}^{d}X_i\geq u\}}}\right)}=\sum_{k=1}^{d}{\mathbb{E}\left({(X_k-u_k)^+1\!\!1_{\{S\geq u\}}}\right)}\/.
   \end{align*} 
   The respective components of the gradient of these indicators are of the form:
  \[K_I-\mathbb{P}\left( X_1>u_{1},\sum_{j=1}^{d}{X_j}\leq u\right),\ldots,K_I-\mathbb{P}\left( X_d>u_{d},\sum_{j=1}^{d}{X_j}\leq u\right) \/,\]
  and, 
  \[K_J-\mathbb{P}\left( X_1>u_{1},\sum_{j=1}^{d}{X_j}\geq u\right),\ldots,K_J-\mathbb{P}\left( X_d>u_{d},\sum_{j=1}^{d}{X_j}\geq u\right) \/,\]
  where,
  \[ K_I=K_J=\sum_{k=1}^{d}\int_{u_k}^{+\infty}(x-u_k)f_{X_k,S}(x,u)dx \/.\]
  Using the Lagrange multipliers to solve our convex optimization problem under the only constraint $u_1+u_2+\cdots+u_d=u$, the following optimality conditions are obtained from \ref{Optcg} in the special case where $g_k(x)=|x|$:
  \begin{equation}\label{ZoneO}
  \mathbb{P}\left( X_i>u_{i}, S\leq u\right) = \mathbb{P}\left( X_j>u_{j}, S\leq u\right) , \forall (i,j)\in\{{{1,2,\ldots,d}}\}^{2}\/.
  \end{equation}
  For the $J$ indicator, this condition can be written:
  \begin{equation}\label{ZoneV}
  \mathbb{P}\left( X_i>u_{i}, S\geq u\right) = \mathbb{P}\left( X_j>u_{j}, S\geq u\right) , \forall (i,j)\in\{{{1,2,\ldots,d}}\}^{2}\/.
  \end{equation}
  Some explicit and semi-explicit formulas for the optimal allocation can be obtained with this optimality condition. Our problem reduces to the study of this allocation depending on the nature of the distributions of the risk $X_k$ and on the form of dependence between them.\\
    \section{Coherence of a capital allocation in a multivariate context}   
      In his article \cite{Denault2001}, Denault introduced the notion of a coherent allocation, fixing four axioms that must be verified by a principle of capital allocation in order to be qualified as coherent. Denault's definition can be used only for allocation methods driven by univariate risk measures, especially coherent ones, according to the criteria defined by Artzner et al. (1999) \cite{Artz}. Our optimal capital allocation is not directly derived from a univariate risk measure, even if it is obtained by minimizing a multivariate risk indicator.\\
      
      In this section, we reformulate coherence axioms in a more general multivariate context. We define also other coherence properties and we try to justify for each one why it is a desirable property from an economic point of view.   
      \subsection{Coherence}
             We follow Denault's idea to define a coherent capital allocation in a multivariate context.
       \begin{definition}[Coherence]
       A capital allocation $(u_1,\ldots, u_d)=A_ {X_1,\ldots, X_d} (u)$ of an initial capital $u\in\mathbb{R}^+$ is coherent if it satisfies the following properties: 
       \begin{enumerate}
          \item[\textbf{1.}]\textbf{Full allocation:} All of the capital $u\in\mathbb{R}^+$ must be allocated between the branches:\[ \sum_{i=1}^{d}u_i=u \/. \]
          \item[\textbf{2.}]\textbf{Symmetry:} If the joint distribution of the vector $(X_1,\ldots,X_d)$ is unchanged by permutation of the risks $X_i$ and $X_j$, then the allocation remains also unchanged by this permutation, and the $i^{th}$ and $j^{th}$ business lines both
          make the same contribution to the risk capital:
          if
          \begin{align*}
          (X_1,\ldots,X_{i-1},X_i,X_{i+1},&\ldots,X_{j-1},X_j,X_{j+1},\ldots,X_d)\\ &\stackrel{\mathcal{L}}{=}\\ (X_1,\ldots,X_{i-1},X_j,X_{i+1},&\ldots,X_{j-1},X_i,X_{j+1},\ldots,X_d)\/,
          \end{align*}
          then $u_i = u_j$.
          \item[\textbf{3.}]\textbf{Riskless allocation:} For a deterministic risk $X=c$, where the constant $c\in{\mathbb{R}^+}$:\[ A_{X,X_1,\ldots,X_d}(u)=(c, A_{X_1,\ldots,X_d}(u-c)) \/.\]
           This property means that the allocation method relates only risky branches, the presence of a deterministic risk has no impact on the share allocated to the risky branches.
          \item[\textbf{4.}]\textbf{Sub-additivity:} $\forall M\subseteq \{1,\ldots,d\}$, let $(u^*,u^*_1,\ldots,u^*_r)=A_ {\sum_{i\in M}X_i,X_{j\in\{1,\ldots,d\}\backslash M}} (u)$, where $r=d-card(M)$ and $(u_1,\ldots,u_d)=A_ {X_1,\ldots, X_d} (u)$:\[ u^*\leq \sum_{i\in M}u_i \/.\]
             This property means that the allocation takes into account the diversification gain. It is related to the \textit{no undercut} property defined by Denault, which has no sense in our context.
             \item[\textbf{5.}]\textbf{Comonotonic additivity:} For $r\leqslant d$ comonotonic risks,
                \[ A_{X_{i_{i\in\{1,\ldots,d\}\setminus CR}},\sum_{k\in CR}^{}X_k}(u)=(u_{i_{i\in\{1,\ldots,d\}\setminus CR}},\sum_{k\in CR}^{}u_k)\/, \]
                where $(u_1,\ldots,u_d)=A_ {X_1,\ldots, X_d} (u)$ is the allocation of $u$ on the $d$ risks $(X_1,\ldots,X_d)$ and $CR$ denotes the set of the $r$ comonotonic risk indexes.\\
                The concept of comonotonic random variables is related to the studies of Hoeffding (1940) \cite{Hoeff1940} and Fréchet (1951) \cite{Frechet1951}. Here we use the definition of comonotonic risks as it was first mentioned in the actuarial literature in Borch (1962) \cite{Borch1962}.\\
                A vector of random variables $(X_1,X_2,\ldots,X_n)$ is comonotonic if and only if there exists a random variable $Y$ and non-decreasing functions $\varphi_1,\ldots,\varphi_n$ such that: 
                \[ (X_1,\ldots,X_n)=_d(\varphi_1(Y),\ldots,\varphi_n(Y)) \/.\]
                \end{enumerate}
            \label{defC}
       \end{definition}
       
       \subsection{Other desirable properties}
       We define also some desirable properties that an allocation should naturally satisfy. These properties are based on the ideas presented by Artzner et al. (1999) \cite{Artz} for coherent risk measures and on the axiomatic characterization of coherent capital allocations given by Kalkbrener (2009) \cite{coherence2}.
       \begin{definition}[Positive homogeneity]
       An allocation is positively homogeneous, if for any $\alpha\in \mathbb{R}^+$, it satisfies:
          \[ A_{\alpha X_1,\ldots,\alpha X_d}(\alpha u)=\alpha A_{X_1,\ldots,X_d}(u) \/.\]
         \end{definition}
         In other words, a capital allocation method is positively homogeneous, if it is insensitive to cash changes.
         \begin{definition}[Translation invariance]
          An allocation is invariant by translation, if for all $(a_1,\ldots,a_d)\in \mathbb{R}^d$, it satisfies:
             \[ A_{X_1-a_1,\ldots,X_d-a_d}(u)=A_{X_1,\ldots,X_d}\left(u+\sum_{k=1}^{d}a_k\right)-(a_1,\ldots,a_d)\/. \]
           \end{definition}
             The translation invariance property shows that the impact of an increase (decrease) of a risk by a constant amount of its share of allocation of the capital $u$, boils down to an increase (decrease) of its share in the allocation of such capital decreased (increased) by the same amount.
              \begin{definition}[Continuity]
                An allocation is continuous, if for all $i \in \{1,\ldots,d\}$:
                 \[ \lim\limits_{\epsilon \to 0}A_{X_1,\ldots,(1+\epsilon)X_i,\ldots,X_d}(u)=A_{X_1,\ldots,X_i,\ldots,X_d}(u) \/.\]
              \end{definition}
              This property reflects the fact  that a small change to the risk of a business line, have only limited effect on the capital part that we attribute to it.\\  
                           
               Let us recall the definition of the order stochastic dominance, as it is presented in Shaked and Shanthikumar (2007)\cite{OrdreSto}. For random variables $X$ and $Y$, $Y$ first-order stochastically dominates $X$ if and only if:
                \[\bar{F}_X(x)\leq\bar{F}_Y(x), ~~\forall x\in\mathbb{R}^+ \/,\]
                and in this case we denote: $X\leq_{st}Y$.\\ 
                This definition is also equivalent to the following one:
                \[X\leq_{st}Y\Leftrightarrow \mathbb{E}[u(X)]\leq \mathbb{E}[u(Y)], \mbox{for all } u~ \mbox{increasing function}  \]
              \begin{definition}[Monotonicity]
                 An allocation satisfies the monotonicity property, if for $(i,j)\in \{1,\ldots,d\}^2$:
                 \[ X_i\leq_{st}X_j \Rightarrow u_i\leq u_j \/.\]
               \end{definition}
                The monotonicity is a natural requirement, it reflects the fact that if a branch $X_j$ is riskier than branch $X_i$. Then, it is natural to allocate more capital to the risk $X_j$.\\
                                    
                  The RORAC compatibility property defined by Dirk Tasche \cite{Euler2} loses its meaning in absence of the risk measure used in the construction of the allocation method.\\


   
  
  \section{Coherence of the optimal allocation}
   
      In what follows, we show that the capital allocation minimizing the indicator $I$, satisfies the coherence axioms of Definition \ref{defC}, except the sub-additivity. We show also that it satisfies other desirable properties in the second subsection. The same holds for the indicator $J$.
   \subsection{Coherence}
      Firstly, the \textit{full allocation} axiom is verified by construction, since any optimal allocation satisfies the equality: \[ \sum_{i=1}^{d}u_i=u \/.\]
   Proposition \ref{symP} shows that the optimal allocation satisfies the symmetry property.
   \begin{Proposition}[\textbf{Symmetry}]
    Under H1, if for $(i,j)\in\{{{1,2,\ldots,d}}\}^{2}$, $i\neq j$, the couples $(X_i,S^{-i})$ and $(X_j,S^{-j})$ are identically distributed and the penalty functions $g_i$ and $g_j$ are the same $g_i=g_j$, then: \[  u_i=u_j \/.\]
    \label{symP}
   \end{Proposition}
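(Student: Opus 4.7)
The plan is to prove symmetry via a swap argument, leveraging the uniqueness assumption H1: if $(u_1,\ldots,u_d)$ is \emph{the} unique optimal allocation, and if permuting $u_i$ and $u_j$ produces another allocation that is also optimal, then the two must coincide and in particular $u_i=u_j$. So the task reduces to showing that the indicator $I$ is invariant under the swap of $u_i$ and $u_j$ whenever the joint laws of $(X_i,S^{-i})$ and $(X_j,S^{-j})$ agree and $g_i=g_j$.

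Concretely, I would first write the indicator as
\[
I(v_1,\ldots,v_d) \;=\; \sum_{k=1}^{d} \mathbb{E}\bigl[g_k(v_k-X_k)\,1\!\!1_{\{X_k>v_k\}}\,1\!\!1_{\{S\le u\}}\bigr],
\]
and observe that the event $\{S\le u\}$ does not depend on the choice of allocation (it involves $u$ but not the $v_k$). For a given allocation $(v_1,\ldots,v_d)$ with $\sum v_k=u$, the $k$-th summand depends only on the law of $(X_k,S)$ together with $v_k$ and $u$. Decomposing $S=X_k+S^{-k}$, the $k$-th summand can be rewritten purely in terms of the joint law of $(X_k,S^{-k})$:
\[
\mathbb{E}\bigl[g_k(v_k-X_k)\,1\!\!1_{\{X_k>v_k\}}\,1\!\!1_{\{X_k+S^{-k}\le u\}}\bigr].
\]

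Now let $(u_1,\ldots,u_d)$ be the unique minimizer from H1 and define the swapped vector $(u'_1,\ldots,u'_d)$ by $u'_i=u_j$, $u'_j=u_i$, and $u'_k=u_k$ otherwise; this still belongs to $\mathcal{U}^d_u$. The $i$-th and $j$-th summands of $I(u'_1,\ldots,u'_d)$ take the form of the displayed expression above with the labels and values suitably interchanged. Using $g_i=g_j$ and the hypothesis $(X_i,S^{-i})\stackrel{\mathcal{L}}{=}(X_j,S^{-j})$, the $i$-th summand at $(u')$ equals the $j$-th summand at $(u)$, and symmetrically the $j$-th summand at $(u')$ equals the $i$-th summand at $(u)$. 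The remaining summands are untouched. Hence $I(u'_1,\ldots,u'_d)=I(u_1,\ldots,u_d)$, so $(u'_1,\ldots,u'_d)$ is also a minimizer. By uniqueness (H1), $u'_k=u_k$ for every $k$, which forces $u_i=u_j$.

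The only subtle point is the rewriting of $\{S\le u\}$ in terms of $(X_k,S^{-k})$, which is what makes the hypothesis on $(X_i,S^{-i})$ and $(X_j,S^{-j})$ applicable; no moment or regularity assumption beyond H1 is needed, since we only compare two values of the indicator rather than differentiate it. The same argument adapts verbatim to $J$ by replacing $\{S\le u\}$ with $\{S\ge u\}$.
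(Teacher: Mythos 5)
Your proposal is correct and follows essentially the same route as the paper: swap $u_i$ and $u_j$, use $g_i=g_j$ together with $(X_i,S^{-i})\stackrel{\mathcal{L}}{=}(X_j,S^{-j})$ to show the indicator is unchanged, and invoke the uniqueness in H1 to conclude $u_i=u_j$. Your explicit rewriting of $\{S\le u\}$ as $\{X_k+S^{-k}\le u\}$ makes the role of the distributional hypothesis clearer than the paper's terser computation, but it is the same argument.
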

   \begin{proof}
    Let $ (i\neq j)\in\{{{1,2,\ldots,d}}\}^{2}$ be such that $(X_i,S^{-i})$ and $(X_j,S^{-j})$ have the same distribution and the same penalty function $g_i=g_j=g$. If $ u_i\neq u_j$, we may assume $i<j$, and denote:
    \[ (u_1,\ldots,u_i,\ldots,u_j,\ldots,u_d)=A_{X_1,\ldots,X_i,\ldots,X_j,\ldots,X_d}(u) \/,\]
    then, 
    \[ I(u_1,\ldots,u_i,\ldots,u_j,\ldots,u_d)=\underset{v \in \mathcal{U}^{d}_u}{\inf}\mathit{I}(v)=\underset{v \in \mathcal{U}^{d}_u}{\inf}\sum_{k=1}^{d}{\mathbb{E}\left({g_k(v_k-X_k)1\!\!1_{\{X_k>v_k\}}1\!\!1_{\{S\leq u\}}}\right)} \/.\]
    On the other hand, and since $g_i=g_j=g$ and $(X_i,S^{-i})\sim (X_j,S^{-j})$, then: \begin{align*}
     I(u_1,\ldots,u_{i-1},u_j,u_{i+1},\ldots,u_{j-1},u_i,u_{j+1},\ldots,u_d)&=\sum_{k=1,k\neq i, k\neq j}^{d}{\mathbb{E}\left({g_k(u_k-X_k)1\!\!1_{\{X_k>u_k\}}1\!\!1_{\{S\leq u\}}}\right)}\\&+\mathbb{E}\left({g(u_i-X_i)1\!\!1_{\{X_i>u_i\}}1\!\!1_{\{S\leq u\}}}\right)\\&+\mathbb{E}\left({g(u_j-X_j)1\!\!1_{\{X_j>u_j\}}1\!\!1_{\{S\leq u\}}}\right)\\
     &=I(u_1,\ldots,u_i,\ldots,u_j,\ldots,u_d)\/.
     \end{align*}
    From H1, the indicator $I$ admits a unique minimum in $\mathcal{U}^{d}_u$, we deduce that:\[(u_1,\ldots,u_i,\ldots,u_j,\ldots,u_d)= (u_1,\ldots,u_{i-1},u_j,u_{i+1},\ldots,u_{j-1},u_i,u_{j+1},\ldots,u_d)\/.\] 
    We conclude that $u_i = u_j$ .
   \end{proof}
   \begin{corollary}
   Under Assumptions H1 and H3, if  $(X_1,\ldots,X_d)$ is an exchangeable random vector, then the allocation by minimizing $I$ and $J$ indicators is the same and is given by: \[  A_{X_1,\ldots,X_d}(u)=\left(\frac{u}{d},\frac{u}{d},\ldots,\frac{u}{d}\right)
    \/.  \]
   \end{corollary}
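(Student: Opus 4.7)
The plan is to apply the Symmetry Proposition \ref{symP} repeatedly to every pair of indices and conclude by the full allocation constraint. First I would verify the two hypotheses of Proposition \ref{symP} for an arbitrary pair $(i,j)\in\{1,\ldots,d\}^2$. The equality of penalty functions $g_i=g_j$ is immediate from H3. For the distributional hypothesis $(X_i,S^{-i})\stackrel{d}{=}(X_j,S^{-j})$, I would note that exchangeability means the joint law of $(X_1,\ldots,X_d)$ is invariant under the transposition that swaps coordinates $i$ and $j$; since $S^{-i}=\sum_{k\neq i}X_k$ is a symmetric function of the coordinates other than $i$, this transposition sends $(X_i,S^{-i})$ to $(X_j,S^{-j})$, which yields the required distributional identity.

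Second, applying Proposition \ref{symP} pair by pair produces $u_i=u_j$ for every $(i,j)$, so all components of the optimal allocation are equal. Combining this with the full allocation identity $\sum_{i=1}^d u_i=u$ (which holds by construction for the minimizer in $\mathcal{U}^d_u$) forces $u_i=u/d$ for every $i$.

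Third, I would observe that the same reasoning applies verbatim to the indicator $J$: as the authors indicate, the symmetry proposition transposes word for word with $\{S\leq u\}$ replaced by $\{S\geq u\}$, so the unique minimizer of $J$ in $\mathcal{U}^d_u$ also equals $(u/d,\ldots,u/d)$. Hence both indicators yield the same uniform allocation.

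There is no real obstacle here beyond checking the exchangeability argument carefully; the corollary is essentially a direct specialization of Proposition \ref{symP} combined with H1 (uniqueness of the minimum) and the full-allocation constraint.
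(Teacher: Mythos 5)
Your proposal is correct and is exactly the intended argument: the paper states this as an immediate corollary of Proposition \ref{symP}, and your verification that exchangeability yields $(X_i,S^{-i})\stackrel{d}{=}(X_j,S^{-j})$ for every pair, combined with H3 for the penalty functions, H1 for uniqueness, and the full-allocation constraint, is precisely how it follows. The extension to $J$ via the remark that the symmetry argument carries over with $\{S\leq u\}$ replaced by $\{S\geq u\}$ is also as the paper intends.
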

 The following proposition shows that the optimal allocation verifies the Riskless allocation axiom.
 \begin{Proposition}[\textbf{Riskless Allocation}]
 Under Assumptions H1 and H3, and for 1-homogeneous penalty functions, for any $c\in \mathbb{R}$:
       \[ A_{c,X_1,\ldots,X_d}(u)=(c, A_{X_1,\ldots,X_d}(u-c))\/, \]
 where $(c, A_{X_1,\ldots,X_d}(u-c))$ is the concatenated vector of $c$ and the vector $A_{X_1,\ldots,X_d}(u-c)$. 
 \end{Proposition}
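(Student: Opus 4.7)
The plan is to exploit the deterministic nature of $X_0=c$ in order to decouple the augmented indicator into a one–dimensional penalty on $v_0$ and the original indicator on $(v_1,\dots,v_d)$, and then to show that any minimiser must take $v_0=c$ by local one-sided perturbations. Since the common penalty $g$ is $C^1$, convex, non-negative, $1$-homogeneous and vanishes at $0$, it coincides on $\mathbb{R}^-$ with $x\mapsto\beta|x|$; without loss of generality $\beta=1$. Writing $S=X_1+\cdots+X_d$ and noting $\{c+S\leq u\}=\{S\leq u-c\}$, the augmented indicator factorises as
\[
I_{\rm ext}(v_0,\dots,v_d) \;=\; (c-v_0)^+\,\mathbb{P}(S\leq u-c) \;+\; \sum_{k=1}^{d}\mathbb{E}\bigl[(X_k-v_k)^+\,1\!\!1_{\{S\leq u-c\}}\bigr],
\]
under the constraint $v_0+v_1+\cdots+v_d=u$.

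First I would argue that every minimiser $(v_0^{*},\dots,v_d^{*})$ satisfies $v_0^{*}=c$ through two one-sided perturbations. If $v_0^{*}>c$, the deterministic penalty is $0$; transferring the slack $v_0^{*}-c$ to any risky index $k$ with $\mathbb{P}(X_k>v_k^{*},\,S\leq u-c)>0$ keeps that penalty at $0$ while strictly decreasing the $k$-th risky term, contradicting optimality. If $v_0^{*}<c$, I would instead shift $\varepsilon>0$ from some risky $v_k^{*}$ to $v_0^{*}$: the deterministic penalty decreases by exactly $\varepsilon\,\mathbb{P}(S\leq u-c)$, while the elementary estimate $(X_k-v_k^{*}+\varepsilon)^+-(X_k-v_k^{*})^+\leq \varepsilon\,1\!\!1_{\{X_k>v_k^{*}-\varepsilon\}}$ bounds the increase of the $k$-th risky term by $\varepsilon\,\mathbb{P}(X_k>v_k^{*}-\varepsilon,\,S\leq u-c)$; picking $k$ so that $\mathbb{P}(X_k<v_k^{*},\,S\leq u-c)>0$ (a consequence of the joint-support hypothesis on $(X_k,S)$ that accompanies H1) makes the net variation strictly negative for $\varepsilon$ small enough.

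Once $v_0^{*}=c$ is forced, substituting into $I_{\rm ext}$ reduces the problem to minimising $\sum_{k=1}^{d}\mathbb{E}\bigl[(X_k-v_k)^+\,1\!\!1_{\{S\leq u-c\}}\bigr]$ under $v_1+\cdots+v_d=u-c$, which is exactly the indicator $\mathit{I}$ associated with the risk vector $(X_1,\dots,X_d)$ and total capital $u-c$, since the surviving indicator $1\!\!1_{\{S\leq u-c\}}$ is the one arising in that reduced problem. By H1 applied to this reduced problem, the minimiser is uniquely $A_{X_1,\dots,X_d}(u-c)$, which gives the desired identity.

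The main difficulty I anticipate is the non-smoothness of the deterministic contribution at $v_0=c$: the map $v_0\mapsto(c-v_0)^+$ is not differentiable there, so the Lagrange optimality condition \eqref{ZoneO} derived under H2 is not directly usable on the augmented problem—the left and right derivatives of the deterministic term disagree and no single multiplier can simultaneously match those of the risky and deterministic components unless $v_0^{*}=c$. This is precisely why I would replace the gradient equality by the explicit one-sided perturbations above, and why their strict monotonicity hinges on the support hypothesis behind H1.
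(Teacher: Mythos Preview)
Your proof is correct and follows essentially the same three-case strategy as the paper: rule out $v_0^{*}<c$ and $v_0^{*}>c$ by exhibiting strictly better feasible allocations, then use uniqueness (H1) at $v_0^{*}=c$ to identify the remaining coordinates with $A_{X_1,\dots,X_d}(u-c)$. The one notable difference is that you first reduce the $1$-homogeneous convex penalty to $g(x)=\beta|x|$ and then argue via local single-coordinate perturbations, whereas the paper keeps $g$ abstract and redistributes the full slack $\alpha=(u^{*}-c)/d$ uniformly across all risky branches, invoking convexity and $1$-homogeneity of $x\mapsto g(-(x)_+)$; both devices produce the same contradictions.
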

 \begin{proof}
 The presence of a discrete distribution makes the indicator $I$ not differentiable, so we cannot use neither the gradient, nor the optimality condition obtained in the case of existence of joined densities.\\ 
  Let 
             $(u^*,u_1^*,\cdots,u_d^*) = A_{c,X_1,\ldots,X_d}(u)$ and $(u_1,\cdots,u_d)=A_{X_1,\ldots,X_d}(u-c)$.\\
 We denote $S=\sum_{i=1}^{d}X_i$, and the common penalty function $g=g_k, \forall k\in\{1,\ldots,d\}$, the function $g$ is convex on $\mathbb{R}^-$ and $g(0)=0$, we deduce that $g$ is also positively homogeneous.\\
 We distinguish between three possibilities:
 \begin{itemize}
 \item \textit{Case 1: $u^*<c$}\\
 In this case, 
 \begin{align*}
 I(u^*,u^*_1,\ldots,u^*_d)&=\underset{v \in \mathcal{U}^{d+1}_u}{\inf}\mathit{I}(v)=\underset{v \in \mathcal{U}^{d+1}_u}{\inf}\sum_{k=0}^{d}{\mathbb{E}\left({g(v_k-X_k)1\!\!1_{\{X_k>v_k\}}1\!\!1_{\{S\leq u-c\}}}\right)} \\&=\mathbb{E}\left({g(u^*-c)1\!\!1_{\{S\leq u-c\}}}\right)+ \sum_{k=1}^{d}{\mathbb{E}\left({g(u^*_k-X_k)1\!\!1_{\{X_k>u^*_k\}}1\!\!1_{\{S\leq u-c\}}}\right)} \/,
 \end{align*}
 for all $k\in\{1,\ldots,d\}$ we put, for example, $\alpha_k=\alpha=\frac{u^*-c}{d}<0$, and since the function $g$ is convex and $g(0)=0$, it satisfies for all real $0<\beta<1$, $g(\beta x)\leq\beta g(x),\forall x\in\mathbb{R}^-$. Then: 
 \begin{align*}
  I(u^*,u^*_1,\ldots,u^*_d)&\geq\mathbb{E}\left({d\cdot g\left(\frac{u^*-c}{d}\right)1\!\!1_{\{S\leq u-c\}}}\right)+ \sum_{k=1}^{d}{\mathbb{E}\left({g(u^*_k-X_k)1\!\!1_{\{X_k>u^*_k\}}1\!\!1_{\{S\leq u-c\}}}\right)}\\ &=\mathbb{E}\left({d\cdot g\left(-(-\alpha)_+\right)1\!\!1_{\{S\leq u-c\}}}\right)+ \sum_{k=1}^{d}{\mathbb{E}\left({g(-(X_k-u^*_k)_+)1\!\!1_{\{S\leq u-c\}}}\right)}\\&= \sum_{k=1}^{d}{\mathbb{E}\left({[g(-(X_k-u^*_k)_+)+g(-(-\alpha_k)_+)]1\!\!1_{\{S\leq u-c\}}}\right)}\/,\end{align*}
  $x\rightarrow g(-(x)_+)$ is also a 1-homogeneous convex function, then:
  \begin{align*}
    I(u^*,u^*_1,\ldots,u^*_d)
  \geq  \sum_{k=1}^{d}{\mathbb{E}\left({g(-(X_k-(u^*_k+\alpha_k))_+)1\!\!1_{\{S\leq u-c\}}}\right)}\/,
  \end{align*}
  we remark that $\sum_{k=1}^{d}(u^*_k+\alpha_k)=u-c$, then $(u_1^*+\alpha,\ldots,u_d^*+\alpha)\in \mathcal{U}^{d}_{u-c}$.\\
  So, 
   \begin{align*}
    I(u^*,u^*_1,\ldots,u^*_d) &\geq  \sum_{k=1}^{d}{\mathbb{E}\left({g((u^*_k+\alpha_k)-X_k))1\!\!1_{\{X_k>u^*_k+\alpha_k\}}1\!\!1_{\{S\leq u-c\}}}\right)}\\
    &\geq \underset{v \in \mathcal{U}^{d}_{u-c}}{\inf}\sum_{k=1}^{d}{\mathbb{E}\left({g(v_k-X_k)1\!\!1_{\{X_k>v_k\}}1\!\!1_{\{S\leq u-c\}}}\right)}\\
    &=\sum_{k=1}^{d}{\mathbb{E}\left({g(u_k-X_k)^+1\!\!1_{\{S\leq u-c\}}}\right)}\\
    &= I(c,u_1,\ldots,u_d) \/,  
    \end{align*}
then,\[ I(u^*,u^*_1,\ldots,u^*_d)\geq I(c,u_1,\ldots,u_d) \/.\] That is contradictory with the uniqueness of the minimum on the set  $\mathcal{U}^{d+1}_u$.
 \item \textit{Case 2: $u^*>c$}\\
We have : 
\[ I(u^*,u^*_1,\ldots,u^*_d)= \sum_{k=1}^{d}{\mathbb{E}\left({g(u^*_k-X_k)1\!\!1_{\{X_k>u^*_k\}}1\!\!1_{\{S\leq u-c\}}}\right)} \/,\]
and, 
\[ I(c,u_1,\ldots,u_d)= \sum_{k=1}^{d}{\mathbb{E}\left({g(u_k-X_k)1\!\!1_{\{X_k>u_k\}}1\!\!1_{\{S\leq u-c\}}}\right)}\/. \]
Let $\alpha=\frac{u^*-c}{d}>0$, we remark that,$(u^*_1+\alpha,\ldots,u^*_d+\alpha)\in \mathcal{U}^{d}_{u-c}$, and that the penalty function $g$ is decreasing on $\mathbb{R}^-$ because $g^{\prime\prime}(x)\geq 0, \forall x\in \mathbb{R}^-$ and $g^\prime(0^+)= 0$.
Then,
  \begin{align*}
     I(c,u_1,\ldots,u_d)&= \sum_{k=1}^{d}{\mathbb{E}\left({g(-(X_k-u_k)_+)1\!\!1_{\{S\leq u-c\}}}\right)}\\&= \underset{v \in \mathcal{U}^{d}_{u-c}}{\inf}\sum_{k=1}^{d}{\mathbb{E}\left({g(v_k-X_k)1\!\!1_{\{X_k>v_k\}}1\!\!1_{\{S\leq u-c\}}}\right)}\\
         &\leq\sum_{k=1}^{d}{\mathbb{E}\left({g(-(X_k-(u_k^*+\alpha))_+)1\!\!1_{\{S\leq u-c\}}}\right)}\\
     &< \sum_{k=1}^{d}{\mathbb{E}\left({g(-(X_k-u^*_k))_+1\!\!1_{\{S\leq u-c\}}}\right)}\\
     &=\sum_{k=1}^{d}{\mathbb{E}\left({g(u^*_k-X_k)1\!\!1_{\{X_k>u^*_k\}}1\!\!1_{\{S\leq u-c\}}}\right)}= I(u^*,u^*_1,\ldots,u^*_d) \/.  
     \end{align*}
 That is contradictory with the fact that $I(u^*,u^*_1,\ldots,u^*_d)=\underset{v \in \mathcal{U}^{d+1}_u}{\inf}\mathit{I}(v)$.\\
 We deduce that the only possible case is the third one $u^*=c$.  
 \item \textit{Case 3: $u^*=c$}\\
 The uniqueness of the minimum implies that: \begin{align*}
                (u^*,u_1^*,\ldots,u_d^*)=(c,u_1^*,\ldots,u_d^*)&=\displaystyle{\underset{\mathcal{U}^{d+1}_{ u}}{\arg\min}}\sum_{k=1}^{d}\mathbb{E}[g(u_k-X_k)1\!\!1_{\{X_k> u_k\}}1\!\!1_{\{S\leq u-c\}}]\\
                &= \displaystyle{\underset{\mathcal{U}^{d}_{ u-c}}{\arg\min}}\sum_{k=1}^{d}\mathbb{E}[g(u_k-X_k) 1\!\!1_{\{X_k> u_k\}}1\!\!1_{\{S\leq u-c\}}]\\
                &= (c,u_1,\ldots,u_d)\/.
                \end{align*}
 Finally, we have proven that:
 \[ (u^*,u_1^*,\ldots,u_d^*)=(c,u_1,\ldots,u_d) \/.\]
 \end{itemize}
 \end{proof}
 Lemma \ref{sensSub} is related to the sub-additivity property. It will be used in the proof of the comonotonic additivity property. 
  \begin{lemma}
  \label{sensSub}
  Under Assumptions H1,H2 and H3, and for all $(i,j)\in \{1,\ldots,d\}^2$, and where, $x.e_i$ is the dot product of the vector $x\in\mathbb{R}^d$ and the $i^{th}$ component of the canonical basis of $\mathbb{R}^d$.
  \begin{itemize}
  \item if $ A_{X_1,\ldots,X_{i-1},X_i+X_j,X_{i+1},\ldots,X_{j-1},X_{j+1},\ldots,X_d}(u).e_i< A_{X_1,\ldots,X_d}(u).(e_i+e_j) $, then: \[ \forall k\in \{1,\ldots,d\}\setminus{i,j},~~ A_{X_1,\ldots,X_{i-1},X_i+X_j,X_{i+1},\ldots,X_{j-1},X_{j+1},\ldots,X_d}(u).e_k>A_{X_1,\ldots,X_d}(u).e_k\/,\]
  \item if $ A_{X_1,\ldots,X_{i-1},X_i+X_j,X_{i+1},\ldots,X_{j-1},X_{j+1},\ldots,X_d}(u).e_i> A_{X_1,\ldots,X_d}(u).(e_i+e_j) $, then: \[ \forall k\in \{1,\ldots,d\}\setminus{i,j},~~ A_{X_1,\ldots,X_{i-1},X_i+X_j,X_{i+1},\ldots,X_{j-1},X_{j+1},\ldots,X_d}(u).e_k<A_{X_1,\ldots,X_d}(u).e_k\/,\]
  \item if $ A_{X_1,\ldots,X_{i-1},X_i+X_j,X_{i+1},\ldots,X_{j-1},X_{j+1},\ldots,X_d}(u).e_i = A_{X_1,\ldots,X_d}(u).(e_i+e_j) $, then: \[ \forall k\in \{1,\ldots,d\}\setminus{i,j},~~ A_{X_1,\ldots,X_{i-1},X_i+X_j,X_{i+1},\ldots,X_{j-1},X_{j+1},\ldots,X_d}(u).e_k=A_{X_1,\ldots,X_d}(u).e_k\/.\]
  \end{itemize}     
  \end{lemma}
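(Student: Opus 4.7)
The plan is to reduce the lemma to a sign--alignment among Lagrange multipliers, and the whole argument rests on one simple observation: $S=\sum_{k=1}^{d}X_{k}$ is invariant under the replacement of $X_{i}$ and $X_{j}$ by $X_{i}+X_{j}$, so the indicator $1\!\!1_{\{S\leq u\}}$ appearing in the optimality condition~(\ref{Optcg}) is identical in both optimisation problems.

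First I would set up notation. Let $(u_{1},\ldots,u_{d})=A_{X_{1},\ldots,X_{d}}(u)$ and let $(\tilde u_{k})_{k\in\{1,\ldots,d\}\setminus\{j\}}$ denote the allocation returned by the merged problem, where $\tilde u_{i}$ is attached to the merged branch $X_{i}+X_{j}$. For $k\in\{1,\ldots,d\}$ and $v\in[0,u]$ introduce
\[h_{k}(v):=\mathbb{E}\!\left[g'(v-X_{k})1\!\!1_{\{X_{k}>v\}}1\!\!1_{\{S\leq u\}}\right].\]
Applying the Lagrange--multipliers method to each of the two problems, as in the derivation of~(\ref{Optcg}), produces two real constants $\mu$ and $\tilde\mu$ such that $h_{k}(u_{k})=\mu$ for every $k\in\{1,\ldots,d\}$ and $h_{k}(\tilde u_{k})=\tilde\mu$ for every $k\in\{1,\ldots,d\}\setminus\{i,j\}$. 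The optimality equation at the merged index~$i$ involves $X_{i}+X_{j}$ rather than a single $X_{k}$ and only serves to pin down $\tilde u_{i}$; it is not used below.

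Next I would verify that each $h_{k}$ is strictly increasing on $[0,u]$. Since $g$ is convex with $g(0)=0$ and $g\geq 0$ on $\mathbb{R}^{-}$, we have $g'\leq 0$ and $g'$ is non--decreasing on $\mathbb{R}^{-}$, so increasing $v$ both shrinks the event $\{X_{k}>v\}$ and pushes $g'(v-X_{k})$ toward $0$; the support hypothesis on $f_{(X_{k},S)}$ recalled just after~H3 ensures that these two effects are genuinely strict and combine to give strict monotonicity of $h_{k}$. This is the only non--formal step of the proof, and I expect it to be the main obstacle since $g$ is not assumed strictly convex; the joint--density support condition is precisely what is needed to rule out degenerate plateaus of $h_{k}$.

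The three cases of the lemma now follow at once. For any $k,\ell\in\{1,\ldots,d\}\setminus\{i,j\}$, strict monotonicity of $h_{k}$ and $h_{\ell}$ gives
\[\operatorname{sign}(\tilde u_{k}-u_{k})=\operatorname{sign}(\tilde\mu-\mu)=\operatorname{sign}(\tilde u_{\ell}-u_{\ell}),\]
so the differences $\tilde u_{k}-u_{k}$ share a common sign as $k$ ranges over $\{1,\ldots,d\}\setminus\{i,j\}$. Subtracting the two full--allocation identities yields
\[\sum_{k\in\{1,\ldots,d\}\setminus\{i,j\}}(\tilde u_{k}-u_{k})=(u_{i}+u_{j})-\tilde u_{i},\]
and reading this equality according as $\tilde u_{i}<$, $>$ or $=u_{i}+u_{j}$ gives exactly the three conclusions stated in the lemma.
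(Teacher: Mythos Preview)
Your proof is correct and follows essentially the same route as the paper: both compare the Lagrange multipliers of the two optimality systems via the monotonicity of the map $v\mapsto \mathbb{E}[g'(v-X_k)1\!\!1_{\{X_k>v\}}1\!\!1_{\{S\leq u\}}]$, exploiting that $S$ (and hence the event $\{S\le u\}$) is unchanged by the merger of $X_i$ and $X_j$. Your write-up is in fact a bit cleaner---you name $h_k$ explicitly, read off all three cases at once from the common-sign observation together with the full-allocation identity $\sum_{k\neq i,j}(\tilde u_k-u_k)=(u_i+u_j)-\tilde u_i$, and you rightly isolate strict monotonicity of $h_k$ as the one delicate step, which the paper's proof uses without comment.
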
 
  \begin{proof}
  In order to simplify the notation, and without loss of generality, we assume $i=d-1$ and $j=d$. 
  We put,$(u_1,\ldots,u_{d-1},u_d)=A_{X_1,\ldots,X_d}(u)$ and $(u_1^*,\ldots,u_{d-2}^*,u_{d-1}^*)=A_{X_1,\ldots,X_{d-2},X_{d-1}+X_{d}}(u)$.\\
   The optimality condition for $(u_1,\ldots,u_{d-1},u_d)$ is given $\forall (i,j)\in\{1,\ldots,d\}^2$ by equation \ref{Optcg}:
         \[ \mathbb{E}[g^\prime_i(u_i-X_i)1\!\!1_{\{X_i>u_i\}}1\!\!1_{\{S\leq u\}}]=\mathbb{E}[g^\prime_i(u_j-X_j)1\!\!1_{\{X_j>u_j\}}1\!\!1_{\{S\leq u\}}]=\lambda\/,\]
        and for  $(u_1^*,\ldots,u_{d-2}^*,u_{d-1}^*)$ is $\forall i\in\{1,\ldots,d-2\}$
        \[ \mathbb{E}[g^\prime_i(u^*_i-X_i)1\!\!1_{\{X_i>u^*_i\}}1\!\!1_{\{S\leq u\}}]=\mathbb{E}[g^\prime_i(u^*_{d-1}-(X_{d-1}+X_d))1\!\!1_{\{X_d+X_{d-1}>u^*_{d-1}\}}1\!\!1_{\{S\leq u\}}]=\lambda^*\/. \]
 Now, we suppose that $u^*_{d-1}>u_d+u_{d-1}$. In this case there exists $k\in\{1,\ldots,d-2\}$ such that $u^*_k<u_k$, and since the function $x\rightarrow g^\prime(-(x)_+)$ is decreasing on $\mathbb{R}^+$, then:
  \[ \mathbb{E}[g^\prime_i(u_k-X_k)1\!\!1_{\{X_k>u_k\}}1\!\!1_{\{S\leq u\}}]=\lambda< \mathbb{E}[g^\prime_i(u^*_k-X_k)1\!\!1_{\{X_k>u^*_k\}}1\!\!1_{\{S\leq u\}}]=\lambda^* \]
  we deduce from this that for all $k\in{1,\ldots,d-2}$ : $u_k^*<u_k$.\\ 
  The proof is the same if we suppose that $u^*_{d-1}<u_d+u_{d-1}$, and the additive case is a corollary of the two previous ones.
  \end{proof}
  \begin{Proposition}[Comonotonic additivity]
         Under Assumption H2, and for $g_k(x)=|x|$, for all $k\in \{1,\ldots,d\}$, if $r \leq d$ risks $X_{i_{i\in CR}}$ are comonotonic, then:
         \[ A_{X_{i_{i\in\{1,\ldots,d\}\setminus CR}},\sum_{k\in CR}^{}X_k}(u)=(u_{i_{i\in\{1,\ldots,d\}\setminus CR}},\sum_{k\in CR}^{}u_k)\/, \]
                         where $(u_1,\ldots,u_d)=A_ {X_1,\ldots, X_d} (u)$ is the optimal allocation of $u$ on the $d$ risks $(X_1,\ldots,X_d)$, $A_{X_{i_{i\in\{1,\ldots,d\}\setminus CR}},\sum_{k\in CR}^{}X_k}(u)$ is the optimal allocation of $u$ on the $n-d+1$ risks $(X_{i_{i\in\{1,\ldots,d\}\setminus CR}},\sum_{k\in CR}^{}X_k)$, and $CR$ denote the set of the $r$ comonotonic risk indexes.
         \end{Proposition}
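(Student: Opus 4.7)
The plan is to verify that the candidate allocation $((u_i)_{i\notin CR},\sum_{k\in CR}u_k)$ satisfies the optimality condition \ref{ZoneO} of the reduced $(d-r+1)$-variable problem, and then to invoke uniqueness of the minimizer (H1) for this reduced problem to identify it with $A_{X_{i_{i\notin CR}},\sum_{k\in CR}X_k}(u)$. The key intermediate fact is that comonotonicity produces a common threshold on the underlying variable $Y$; this can be combined either directly with uniqueness, or, as the placement of Lemma \ref{sensSub} suggests, with its trichotomy to derive a contradiction in the strict cases.

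First I would use the representation $X_k=\varphi_k(Y)$ with non-decreasing $\varphi_k$ for $k\in CR$ from Definition \ref{defC}, and introduce the thresholds $t_k=\inf\{y:\varphi_k(y)>u_k\}$, so that $\{X_k>u_k\}$ coincides almost surely with $\{Y>t_k\}$. Plugging this into \ref{ZoneO} applied to $(u_1,\ldots,u_d)$ gives $\mathbb{P}(Y>t_k,S\leq u)=\mathbb{P}(Y>t_{k'},S\leq u)$ for every $k,k'\in CR$; positivity of the joint density provided by H2 then forces a single threshold $t_k=t$. Since $\Phi=\sum_{k\in CR}\varphi_k$ is non-decreasing and $\Phi(t)=\sum_{k\in CR}u_k$, the aggregated event $\{\sum_{k\in CR}X_k>\sum_{k\in CR}u_k\}$ coincides almost surely with $\{Y>t\}$, whence
\[
\mathbb{P}\Bigl(\sum_{k\in CR}X_k>\sum_{k\in CR}u_k,\,S\leq u\Bigr)=\mathbb{P}(Y>t,S\leq u)=\lambda,
\]
the common probability appearing in the full optimality condition. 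Together with $\mathbb{P}(X_i>u_i,S\leq u)=\lambda$ for $i\notin CR$ and the budget identity $\sum_{i\notin CR}u_i+\sum_{k\in CR}u_k=u$, this shows that the candidate is feasible for the reduced problem and satisfies its optimality condition.

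To close the argument via Lemma \ref{sensSub} as the proof's hint suggests, I would reduce to $r=2$, say $CR=\{d-1,d\}$, and let $(u_1^*,\ldots,u_{d-2}^*,v^*)$ denote the reduced optimal allocation. If $v^*<u_{d-1}+u_d$, the lemma gives $u_k^*>u_k$ for all $k\leq d-2$, hence the common value $\lambda^*$ of the reduced optimality condition satisfies $\lambda^*=\mathbb{P}(X_k>u_k^*,S\leq u)<\lambda$; but the inclusion $\{\sum_{k\in CR}X_k>v^*\}\supseteq\{\sum_{k\in CR}X_k>u_{d-1}+u_d\}$ combined with the displayed identity above yields $\lambda^*=\mathbb{P}(\sum_{k\in CR}X_k>v^*,S\leq u)>\lambda$, a contradiction. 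The case $v^*>u_{d-1}+u_d$ is symmetric; only $v^*=u_{d-1}+u_d$ survives, and the trichotomy then forces $u_k^*=u_k$ for $k\leq d-2$. A general $CR$ is reached by iterating this pairwise aggregation, since the sum of a comonotonic subset is still a non-decreasing function of $Y$ and remains comonotonic with the other members of $CR$.

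The principal obstacle I anticipate is the common-threshold step: equality of the probabilities $\mathbb{P}(Y>t_k,S\leq u)$ only determines the $t_k$ up to plateaus of $\varphi_k$ on which the conditional law of $Y$ given $\{S\leq u\}$ has no mass, and one must use positivity of the joint density of $(X_k,S)$ on $[0,u]^2$ provided by H2 to convert this into the pointwise identification $t_k=t_{k'}$ needed for the aggregated event to coincide almost surely with $\{Y>t\}$.
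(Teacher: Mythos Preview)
Your proposal is correct and follows essentially the same route as the paper: reduce to a pair of comonotonic risks, use the optimality condition \eqref{ZoneO} together with the strict monotonicity of the link (the paper writes $X_i=h(X_j)$ with $h$ strictly increasing, which is your common-threshold step), derive a contradiction via the trichotomy of Lemma~\ref{sensSub}, and then iterate to general $CR$. The only cosmetic difference is that you phrase the comonotonic representation through an abstract driver $Y$ and also note the shortcut of verifying the reduced optimality condition directly and invoking uniqueness, whereas the paper works with $X_j$ as the driver and goes straight to the contradiction argument.
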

       \begin{proof}
       For $(i,j)\in\{1,\ldots,d\}^2$, if $X_i$ and $X_j$ are comonotonic risks, then, there exists an increasing non negative function $h$ such that $X_i=h(X_j)$, and we remark that $h$ is strictly increasing under Assumption H2. Let $f$ be the function $x\rightarrow f(x)=x+h(x)$, so that $X_i+X_j=f(X_j)$. \\ We denote $(u_1,\ldots,u_d)=A_{X_1,\ldots,X_d}(u)$ and $(u_1^*,\ldots,u_{d-1}^*)=A_{X_{i\in\{1,\ldots,d\}\setminus\{i,j\},X_i+X_j}}(u)$, then,\\ $A_{X_{i\in\{1,\ldots,d\}\setminus\{i,j\},X_i+X_j}}(u).e_{d-1}=u^*_{d-1}$ and $A_{X_1,\ldots,X_d}(u).(e_i+e_j)=u_i+u_j$.\\
       From the optimality condition for the allocation $A_{X_1,\ldots,X_d}(u)$, given in Equation \ref{ZoneO}: \[ \mathbb{P}(X_i\geq u_i, S\leq u) = \mathbb{P}(X_j\geq u_j, S\leq u) \/,\] we deduce that $u_i=h(u_j)$ and that $u_i+u_j=f(u_j)$.\\
       If there exists $k\in\{1,\ldots,d\}\setminus\{i,j\}$, such that $u^*_k<u_k$, then $\forall k\in\{1,\ldots,d\}\setminus\{i,j\}$: \[ \mathbb{P}(X_k\geq u^*_k, S\leq u) > \mathbb{P}(X_k\geq u_k, S\leq u)\/, \]
       so, \begin{align*}
       \mathbb{P}(X_i+X_j\geq u^*_{d-1}, S\leq u)&= \mathbb{P}(X_j\geq f^{-1}(u^*_{d-1}), S\leq u)\\&=\mathbb{P}(X_k\geq u^*_k, S\leq u)\\
       &>\mathbb{P}(X_k\geq u_k, S\leq u)\\
       &=\mathbb{P}(X_j\geq u_j, S\leq u)\/,
       \end{align*}
       finally, we deduce that: $f^{-1}(u^*_{d-1}) < u_j$, then $u^*_{d-1}<f(u_j)=u_i+u_j$ and, \\ $\sum_{k\in\{1,\ldots,d\}\setminus\{i,j\}}u^*_k<\sum_{k\in\{1,\ldots,d\}\setminus\{i,j\}}u_k$ which is absurd.\\
       In the same way, the case $u_k<u^*_k$ for $k\in\{1,\ldots,d\}\setminus\{i,j\}$ leads to the contradiction.\\
       Using Lemma \ref{sensSub}, and under Assumption H3, we deduce the optimal allocation for the other risks $X_k$, $k\in\{1,\ldots,d\}\setminus \{i,j\}$.\\
              The additivity property for two comonotonic risks can be trivially generalized to several comonotonic risks.
       \end{proof}

  Concerning the sub-additivity property, we have not yet managed to build a demonstration for this property. However, simulations using the optimization algorithm presented in Cénac et al. (2012) \cite{AR2}, seem to confirm the sub-additivity of the allocation by minimizing the indicators $I$ and $J$, even for classic examples of non sub-additivity of the risk measure VaR.
\begin{remark}
The previous properties have been demonstrated for the optimal allocation by minimizing the risk indicator $I$, they can be demonstrated with the same arguments for the optimal allocation by minimization of the indicator $J$.
\end{remark} 
   \subsection{Other desirable properties}
    In this section, we show that the optimal allocation by minimization of the indicators $I$ and $J$ satisfies some desirable properties. We consider the allocation by minimizing the multivariate risk indicator $I$, the proofs are almost the same in the case of the indicator $J$.
  
   \begin{Proposition}[Positive homogeneity]
  Under Assumption H1, and for 1-homogeneous penalty functions $g_k$, $k\in\{1,\ldots,d\}$, for any $\alpha\in \mathbb{R}^+$:
   \[ A_{\alpha X_1,\ldots,\alpha X_d}(\alpha u)=\alpha A_{X_1,\ldots,X_d}(u) \/.\]
   \end{Proposition}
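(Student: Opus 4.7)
The plan is to show directly that the whole indicator transforms by a scalar under the joint rescaling $(X,u,v)\mapsto(\alpha X,\alpha u,\alpha v)$, so that minimizing on $\mathcal{U}^d_{\alpha u}$ is equivalent, via the bijection $v\mapsto \alpha v$, to minimizing on $\mathcal{U}^d_u$. The uniqueness of the optimum (Assumption H1) then pins down the allocation.

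Concretely, first I fix $\alpha>0$ (the case $\alpha=0$ is trivial) and observe that $v\mapsto\alpha v$ is a bijection from $\mathcal{U}^d_u$ onto $\mathcal{U}^d_{\alpha u}$, because $\sum \alpha v_i=\alpha u$ and positivity is preserved. Next, I write out the indicator associated to the scaled risks and capital, for an allocation of the form $\alpha v$ with $v\in\mathcal{U}^d_u$:
\[
\mathcal{K}_{\alpha X}(\alpha v)=\sum_{k=1}^{d}\mathbb{E}\!\left[g_k(\alpha v_k-\alpha X_k)\,1\!\!1_{\{\alpha X_k>\alpha v_k\}}\,1\!\!1_{\{\alpha S\leq \alpha u\}}\right].
\]
Since $\alpha>0$, the indicator functions are invariant: $1\!\!1_{\{\alpha X_k>\alpha v_k\}}=1\!\!1_{\{X_k>v_k\}}$ and $1\!\!1_{\{\alpha S\leq \alpha u\}}=1\!\!1_{\{S\leq u\}}$. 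The $1$-homogeneity of each $g_k$ on $\mathbb{R}^-$ gives $g_k(\alpha(v_k-X_k))=\alpha\,g_k(v_k-X_k)$ on the set $\{X_k>v_k\}$. Factoring $\alpha$ out of the expectation yields the scaling identity
\[
\mathcal{K}_{\alpha X}(\alpha v)=\alpha\,\mathcal{K}_{X}(v),\qquad v\in\mathcal{U}^d_u.
\]

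Finally I conclude by the change of variables $w=\alpha v$: as $v$ ranges over $\mathcal{U}^d_u$, $w$ ranges over $\mathcal{U}^d_{\alpha u}$, and the map $v\mapsto \mathcal{K}_{\alpha X}(\alpha v)$ is the map $v\mapsto \alpha\mathcal{K}_{X}(v)$, which is minimized at the unique point $A_{X_1,\ldots,X_d}(u)$ by H1. Therefore the unique minimizer of $\mathcal{K}_{\alpha X}$ on $\mathcal{U}^d_{\alpha u}$ is $\alpha A_{X_1,\ldots,X_d}(u)$, i.e.\ $A_{\alpha X_1,\ldots,\alpha X_d}(\alpha u)=\alpha A_{X_1,\ldots,X_d}(u)$. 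The same argument applies verbatim to $J$, since only the direction of the inequality on $S$ changes and this inequality is likewise invariant under positive rescaling.

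There is no real obstacle here: the only subtlety is to check that $1$-homogeneity of $g_k$ on $\mathbb{R}^-$ combined with the support indicator $1\!\!1_{\{X_k>v_k\}}$ yields the clean factorization, but this is immediate because the argument of $g_k$ lies in $\mathbb{R}^-$ precisely on that event. The role of H1 is solely to upgrade the equality of infima to an equality of argmins.
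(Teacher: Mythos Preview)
Your proof is correct and follows essentially the same approach as the paper: both exploit the bijection $v\mapsto\alpha v$ between $\mathcal{U}^d_u$ and $\mathcal{U}^d_{\alpha u}$, use the invariance of the indicator events under positive scaling together with the $1$-homogeneity of $g_k$ to obtain $\mathcal{K}_{\alpha X}(\alpha v)=\alpha\,\mathcal{K}_X(v)$, and then invoke H1 to pass from equality of infima to equality of argmins. Your write-up is in fact a bit more explicit than the paper's, notably in isolating the role of H1 and treating the trivial case $\alpha=0$ separately.
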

      \begin{proof}
           Since the penalty functions are convex and 1-homogeneous, then, for any $\alpha\in \mathbb{R}^{*+}$ :\begin{align*}
           A_{\alpha X_1,\ldots,\alpha X_d}(\alpha u)&= \displaystyle{\underset{(u^*_1,\ldots,u^*_d)\in\mathcal{U}^d_{\alpha u}}{\arg\min}}\sum_{k=1}^{d}\mathbb{E}[g_k(u_k^*-\alpha X_k) 1\!\!1_{\{\alpha X_k>u_k^*\}}1\!\!1_{\{\alpha S\leq \alpha u\}}]\\
           &=\displaystyle{\underset{(u^*_1,\ldots,u^*_d)\in\mathcal{U}^d_{\alpha u}}{\arg\min}}\sum_{k=1}^{d}{\alpha}\mathbb{E}[g_k\left(\frac{u_k^*}{\alpha}-X_k\right) 1\!\!1_{\{X_k>\frac{u_k^*}{\alpha}\}}1\!\!1_{\{S\leq u\}}]\\
           &=\displaystyle{\underset{(u^*_1,\ldots,u^*_d)\in\mathcal{U}^d_{\alpha u}}{\arg\min}}\sum_{k=1}^{d}\mathbb{E}[g_k\left(\frac{u_k^*}{\alpha}-X_k\right) 1\!\!1_{\{X_k>\frac{u_k^*}{\alpha}\}}1\!\!1_{\{S\leq u\}}]\\
           &=\alpha\displaystyle{\underset{(u_1,\ldots,u_d)\in\mathcal{U}^d_{u}}{\arg\min}}\sum_{k=1}^{d}\mathbb{E}[g_k(u_k-X_k) 1\!\!1_{\{X_k> u_k\}}1\!\!1_{\{S\leq u\}}]\\
           &=\alpha A_{X_1,\ldots,X_d}(u)\/.
           \end{align*} 
      \end{proof}
   \begin{Proposition}[Translation invariance]
 Under Assumptions H1, H2 and for all $(a_1,\ldots,a_d)\in \mathbb{R}^d$, such that the joint density $f(X_k,S)$ support contains $[0, u+\sum_{k=1}^{d}a_k]^2$, for all $k\in\{1,\ldots,d\}$:
    \[ A_{X_1-a_1,\ldots,X_d-a_d}(u)=A_{X_1,\ldots,X_d}\left(u+\sum_{k=1}^{d}a_k\right)-(a_1,\ldots,a_d) \/.\]
  \end{Proposition}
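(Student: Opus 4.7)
The natural plan is to perform a change of variables inside the minimization defining $A_{X_1-a_1,\ldots,X_d-a_d}(u)$ and reduce it directly to the minimization defining $A_{X_1,\ldots,X_d}\bigl(u+\sum_k a_k\bigr)$. Setting $Y_k = X_k - a_k$ and $T = \sum_{k=1}^d a_k$, I would start from the variational characterization
\[
A_{Y_1,\ldots,Y_d}(u) = \underset{v\in\mathcal{U}^d_u}{\arg\inf}\sum_{k=1}^d \mathbb{E}\!\left[g_k(v_k - Y_k)\,\mathbf{1}_{\{Y_k>v_k\}}\mathbf{1}_{\{\sum_j Y_j\leq u\}}\right]
\]
and perform the bijective substitution $w_k = v_k + a_k$, for which $\sum_k v_k = u$ if and only if $\sum_k w_k = u + T$, i.e. $v\in\mathcal{U}^d_u \Leftrightarrow w\in\mathcal{U}^d_{u+T}$.

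The second step is purely algebraic: under this substitution one checks that each summand transforms correctly. Since $v_k - Y_k = w_k - X_k$, $\{Y_k>v_k\} = \{X_k>w_k\}$, and $\{\sum_j Y_j \leq u\} = \{S \leq u + T\}$, the objective becomes
\[
\sum_{k=1}^d \mathbb{E}\!\left[g_k(w_k - X_k)\,\mathbf{1}_{\{X_k>w_k\}}\mathbf{1}_{\{S\leq u+T\}}\right],
\]
which is exactly $I_{X_1,\ldots,X_d}(w)$ at the capital level $u+T$. Hence minimizing in $v$ over $\mathcal{U}^d_u$ is equivalent to minimizing in $w$ over $\mathcal{U}^d_{u+T}$, with the bijection $w = v + (a_1,\ldots,a_d)$ mapping minimizers to minimizers.

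The final step is to invoke uniqueness. The support condition on $f_{(X_k,S)}$ containing $[0,u+T]^2$ is exactly the hypothesis under which assumption H1 applies to the minimization on $\mathcal{U}^d_{u+T}$ (as recalled after the list of assumptions); it ensures $I_{\mathbf{X}}$ has a unique minimizer at that capital level, and by the bijection above so does $I_{\mathbf{Y}}$ on $\mathcal{U}^d_u$. Uniqueness then turns the $\arg\min$ identity into an equality between single points, giving
\[
A_{X_1-a_1,\ldots,X_d-a_d}(u) + (a_1,\ldots,a_d) = A_{X_1,\ldots,X_d}(u+T).
\]
The main (and essentially only) delicate point is verifying that the support hypothesis really transfers H1 to both optimization problems so that uniqueness can be invoked; once that is granted, the proof reduces to the clean change of variables described above.
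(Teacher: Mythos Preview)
Your argument is correct and is a mild but genuine variant of the paper's proof. The paper works at the level of the first-order (Lagrange) optimality condition~(\ref{Optcg}): it writes that system for the minimizer $(u^*_1,\ldots,u^*_d)=A_{X_1-a_1,\ldots,X_d-a_d}(u)$, rewrites each equation algebraically as the optimality condition for $A_{X_1,\ldots,X_d}(u+a)$ evaluated at $(u^*_1+a_1,\ldots,u^*_d+a_d)$, and then invokes uniqueness of the solution of~(\ref{Optcg}) on $\mathcal{U}^d_{u+a}$. You instead perform the change of variables $w=v+a$ directly inside the objective functional and match the two $\arg\min$ problems, invoking H1 only at the end. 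Your route is slightly more elementary---it does not actually use the differentiability in H2, only the uniqueness in H1---whereas the paper's route has the virtue of reusing the optimality-condition machinery already set up for the other coherence proofs. One small caveat: your claimed equivalence $v\in\mathcal{U}^d_u \Leftrightarrow w\in\mathcal{U}^d_{u+T}$ silently assumes the box constraints $v_k\in[0,u]$ and $w_k\in[0,u+T]$ match up under the shift, which is not automatic for arbitrary $a\in\mathbb{R}^d$; the paper makes the same tacit assumption when it asserts $(u^*_1+a_1,\ldots,u^*_d+a_d)\in\mathcal{U}^d_{u+a}$, so you are at the same level of rigor, but it would be cleaner in either proof to note that the minimizers lie in the relative interior (which is what the support condition and strict convexity guarantee).
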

  \begin{proof} We denote by $(u^*_1,\ldots,u^*_d)$ the optimal allocation $A_{X_1-a_1,\ldots,X_d-a_d}(u)$, and by $(u_1,\ldots,u_d)$ the optimal allocation $A_{X_1,\ldots,X_d}\left(u+\sum_{k=1}^{d}a_k\right)$.\\  Using the optimality condition (\ref{Optcg}), $(u^*_1,\ldots,u^*_d)$ is the unique solution in $\mathcal{U}^d_u$ of the following equations system:
  \[ \mathbb{E}[g^\prime_i(u^*_i-(X_i-a_i))1\!\!1_{\{X_i-a_i>u^*_i\}}1\!\!1_{\{S-a\leq u\}}]=\mathbb{E}[g^\prime_j(u^*_j-(X_j-a_j))1\!\!1_{\{X_j-a_j>u^*_j\}}1\!\!1_{\{S-a\leq u\}}],~~\forall j\in\{1,\ldots,d\} 
       \/,\]
       where $a=\sum_{k=1}^{d}a_k$. Then, $(u^*_1,\ldots,u^*_d)$ satisfies also:
        \[ \mathbb{E}[g^\prime_i(u^*_i+a_i-X_i)1\!\!1_{\{X_i>u^*_i+a_i\}}1\!\!1_{\{S\leq u+a\}}]=\mathbb{E}[g^\prime_i(u^*_i+a_j-X_j)1\!\!1_{\{X_j>u^*_j+a_i\}}1\!\!1_{\{S\leq u+a\}}],~~\forall j\in\{1,\ldots,d\} 
              \/.\]
              Since, $(u^*_1+a_1,\ldots,u^*_d+a_d)\in\mathcal{U}^d_{u+a}$, and from the solution uniqueness of the optimality condition~(\ref{Optcg}) for the allocation $A_{X_1,\ldots,X_d}\left(u+a\right)$, we deduce that: $u^*_k+a_k=u_k$ for all $k\in\{1,\ldots,d\}$.
             \end{proof}
   \begin{Proposition}[Continuity]
       Under Assumptions H1 and H2, and if $\forall k\in\{1,\ldots,d\}$, $\exists \epsilon_0>0$ such that:
       \[ \forall \epsilon, |\epsilon|<\epsilon_0,~~\mathbb{E}[\underset{v\in[0,u]}{\sup}|g_k^\prime(v-(1+\epsilon)X_k)|]<+\infty \/,\]
        then, if $(X_1,\ldots,X_d)$ is a continuous random vector, for all $i \in \{1,\ldots,d\}$:
      \[ \lim\limits_{\epsilon \to 0}A_{X_1,\ldots,(1+\epsilon)X_i,\ldots,X_d}(u)=A_{X_1,\ldots,X_i,\ldots,X_d}(u) \/.\]
   \end{Proposition}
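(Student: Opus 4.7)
My plan is a subsequence-plus-uniqueness argument, passing to the limit in the optimality condition (\ref{Optcg}). Write $(u_1^\epsilon,\ldots,u_d^\epsilon)=A_{X_1,\ldots,(1+\epsilon)X_i,\ldots,X_d}(u)$ and $S^\epsilon=S+\epsilon X_i$. Since each perturbed allocation belongs to the compact simplex $\mathcal{U}^d_u$, it suffices to prove that any subsequential limit $(\tilde u_1,\ldots,\tilde u_d)$ obtained along some $\epsilon_n\to 0$ coincides with $A_{X_1,\ldots,X_d}(u)$; convergence of the full family then follows by the standard subsequence principle.

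The central analytic step is to pass to the limit in the perturbed optimality system, which for every $j\neq i$ reads
\[ \mathbb{E}[g'_j(u_j^\epsilon - X_j)1\!\!1_{\{X_j>u_j^\epsilon\}}1\!\!1_{\{S^\epsilon\leq u\}}]=\mathbb{E}[g'_i(u_i^\epsilon-(1+\epsilon)X_i)1\!\!1_{\{(1+\epsilon)X_i>u_i^\epsilon\}}1\!\!1_{\{S^\epsilon\leq u\}}]. \]
Because $(X_1,\ldots,X_d)$ is continuous, $\mathbb{P}(X_j=\tilde u_j)=\mathbb{P}(S=u)=0$, so along the subsequence each of the indicators $1\!\!1_{\{X_j>u_j^{\epsilon_n}\}}$, $1\!\!1_{\{(1+\epsilon_n)X_i>u_i^{\epsilon_n}\}}$ and $1\!\!1_{\{S^{\epsilon_n}\leq u\}}$ converges almost surely to its counterpart with $\tilde u_k$ and $\epsilon=0$; combined with the continuity of $g'_k$, the whole integrand converges almost surely.

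The hard part will be the dominated convergence step, which requires an integrable envelope uniform in $\epsilon$ on a neighborhood of $0$. This is exactly what the hypothesis $\mathbb{E}[\sup_{v\in[0,u]}|g'_k(v-(1+\epsilon)X_k)|]<+\infty$ for $|\epsilon|<\epsilon_0$ delivers: since $u_k^\epsilon\in[0,u]$, the integrand is pointwise majorized by this supremum, so dominated convergence applies. Passing to the limit yields the unperturbed optimality condition (\ref{Optcg}) for $(\tilde u_1,\ldots,\tilde u_d)\in\mathcal{U}^d_u$, and the uniqueness provided by H1 forces $(\tilde u_1,\ldots,\tilde u_d)=A_{X_1,\ldots,X_d}(u)$, which completes the argument. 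The proof for the indicator $J$ is obtained by replacing $\{S\leq u\}$ by $\{S\geq u\}$ throughout, with the same use of continuity to handle the boundary event $\{S=u\}$.
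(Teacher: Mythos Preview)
Your proposal is correct and follows essentially the same route as the paper's proof: extract convergent subsequences by compactness of $\mathcal{U}^d_u$, pass to the limit in the perturbed optimality condition~(\ref{Optcg}) via dominated convergence using the stated integrability hypothesis, and conclude by the uniqueness in H1 together with the subsequence principle. If anything, you are slightly more explicit than the paper in justifying the almost sure convergence of the indicator functions through the continuity assumption (so that the boundary events $\{X_j=\tilde u_j\}$ and $\{S=u\}$ are null), a point the paper leaves implicit.
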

    \begin{proof}
    Let $(u_1,\ldots,u_d)$ be the optimal allocation of $u$ on the $d$ risks $(X_1,\ldots,X_d)$:
    \[  (u_1,\ldots,u_d)=A_ {X_1,\ldots, X_i,\ldots, X_d} (u) \/,\]
    then $(u_1,\ldots,u_d)$ is the unique solution in $\mathcal {U} ^ d_ {u}$ of Equations system (\ref{Optcg}):
    \begin{equation*}
    \mathbb{E}[g^\prime_i(u_i-X_i)1\!\!1_{\{X_i>u_i\}}1\!\!1_{\{S\leq u\}}]=\mathbb{E}[g^\prime_i(u_j-X_j)1\!\!1_{\{X_j>u_j\}}1\!\!1_{\{S\leq u\}}],~~\forall j\in\{1,\ldots,d\} \/.
    \end{equation*}
   For $\epsilon\in\mathbb{R}$, let $(u^\epsilon_1,\ldots,u^\epsilon_d)$ be the optimal allocation of $u$ on the $d$ risks $(X_1,\ldots,X_{i-1},(1+\epsilon)X_i,X_{i+1},\ldots,X_d)$:
   \[  (u^\epsilon_1,\ldots,u^\epsilon_d)=A_ {X_1,\ldots,X_{i-1},(1+\epsilon)X_i,X_{i+1},\ldots,X_d}(u) \/,\] 
 then $(u^\epsilon_1,\ldots,u^\epsilon_d)$ is the unique solution in $\mathcal{U}^d_ {u}$ of the following equations system:
     \[ \mathbb{E}[g^\prime_i(u^\epsilon_i-(1+\epsilon)X_i)1\!\!1_{\{(1+\epsilon)X_i>u^\epsilon_i\}}1\!\!1_{\{S+\epsilon X_i\leq u\}}]=\mathbb{E}[g^\prime_i(u^\epsilon_j-X_j)1\!\!1_{\{X_j>u^\epsilon_j\}}1\!\!1_{\{S+\epsilon X_i\leq u\}}],~~\forall j\in\{1,\ldots,d\} \/.\]
Since $\mathcal{U}^d_{u}$ is a compact on $(\mathbb{R}^+)^d$, we may consider a convergent subsequence $(u^{\epsilon_k}_1,\ldots,u^{\epsilon_k}_d)$ of $(u^\epsilon_1,\ldots,u^\epsilon_d)$.\\
Since the penalties functions satisfy:
       \[\exists \epsilon_0>0,~~ \forall \epsilon, |\epsilon|<\epsilon_0,~~\mathbb{E}[\underset{v\in[0,u]}{\sup}|g_k^\prime(v-(1+\epsilon)X_k)|]<+\infty\/, \]
we use Lebesgue's dominated convergence Theorem to get:
  \[ \mathbb{E}[g^\prime_i(\lim\limits_{\epsilon \to 0}u^{\epsilon_k}_i-X_i)1\!\!1_{\{X_i>\lim\limits_{\epsilon \to 0}u^{\epsilon_k}_i\}}1\!\!1_{\{S\leq u\}}]=\mathbb{E}[g^\prime_i(\lim\limits_{\epsilon \to 0}u^{\epsilon_k}_j-X_j)1\!\!1_{\{X_j>\lim\limits_{\epsilon \to 0}u^{\epsilon_k}_j\}}1\!\!1_{\{S\leq u\}}],~~\forall j\in\{1,\ldots,d\}\/, \]
  thereby $(\lim\limits_{\epsilon \to 0}u^{\epsilon_k}_1,\ldots,\lim\limits_{\epsilon \to 0}u^{\epsilon_k}_d)$ is a solution of Equation (\ref{Optcg}), because  $\sum_{l=1}^{d}\lim\limits_{\epsilon \to 0}u^{\epsilon_k}_l=\lim\limits_{\epsilon \to 0}\sum_{l=1}^{d}u^\epsilon_l=u$, $(\lim\limits_{\epsilon \to 0}u^{\epsilon_k}_1,\ldots,\lim\limits_{\epsilon \to 0}u^{\epsilon_k}_d)\in\mathcal{U}^d_{u}$.\\ From the solution uniqueness of (\ref{Optcg}) in $\mathcal{U}^d_{u}$, we deduce that: $\lim\limits_{k \to \infty}u^{\epsilon_k}_i=u_i$ for all $i\in\{1,\ldots,d\}$.\\
  For all convergent subsequence of $(u_1^\epsilon,\ldots,u_d^\epsilon)$ the limit point is $(u_1,\ldots,u_d)$, we deduce that:\[ \lim\limits_{\epsilon \to 0}(u_1^\epsilon,\ldots,u_d^\epsilon)=(u_1,\ldots,u_d) \/.\]      \end{proof}
\begin{Proposition}[Monotonicity]
   Under Assumption H2, and for $(i,j)\in \{1,\ldots,d\}^2$, such that $g_i=g_j=g$:
   \[ X_i\leq_{st}X_j \Rightarrow u_i\leq u_j \/.\]
 \end{Proposition}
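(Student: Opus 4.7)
The plan is to mimic the swap argument used in the proof of symmetry. Suppose for contradiction that $(u_1,\ldots,u_d)=A_{X_1,\ldots,X_d}(u)$ satisfies $u_i>u_j$ while $X_i\leq_{st}X_j$, and let $\tilde u$ be the allocation obtained by interchanging the $i$\textsuperscript{th} and $j$\textsuperscript{th} coordinates. Since $u_i\neq u_j$ we have $\tilde u\neq(u_1,\ldots,u_d)$, so the uniqueness assumption H1 forces $\Delta:=I(\tilde u)-I(u_1,\ldots,u_d)>0$. My aim is to derive from $X_i\leq_{st}X_j$ that $\Delta\leq 0$, which will yield the desired contradiction and hence $u_i\leq u_j$.

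The computation of $\Delta$ is routine: the indicator $1\!\!1_{\{S\leq u\}}$ is a symmetric function of the coordinates of $\mathbf{X}$ (depending only on their sum) and is therefore left unchanged by the swap, and all terms in $I$ indexed by $k\notin\{i,j\}$ cancel. Writing $a=u_j$, $b=u_i$ and
\[ \Phi(x)\;=\;g(a-x)1\!\!1_{\{x>a\}}\,-\,g(b-x)1\!\!1_{\{x>b\}}, \]
one obtains
\[ \Delta\;=\;\mathbb{E}\bigl[\Phi(X_i)1\!\!1_{\{S\leq u\}}\bigr]\,-\,\mathbb{E}\bigl[\Phi(X_j)1\!\!1_{\{S\leq u\}}\bigr]. \]
A case analysis on the three regions $x\leq a$, $a<x\leq b$, $x>b$, using that $g$ is convex, non-negative on $\mathbb{R}^-$ and vanishes at $0$ (so that $g$ is non-increasing), shows that $\Phi$ is non-negative and non-decreasing on $\mathbb{R}^+$. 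The one-dimensional characterisation $X_i\leq_{st}X_j\Longleftrightarrow\mathbb{E}[\varphi(X_i)]\leq\mathbb{E}[\varphi(X_j)]$ for every non-decreasing $\varphi$, applied to $\varphi=\Phi$, then gives $\mathbb{E}[\Phi(X_i)]\leq\mathbb{E}[\Phi(X_j)]$, which is the required inequality but in the absence of the factor $1\!\!1_{\{S\leq u\}}$.

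The main obstacle lies in reinstating this factor. Because $S$ contains both $X_i$ and $X_j$, the indicator couples $X_i$ with $(X_j,X_{-\{i,j\}})$ in a non-trivial way, and a priori $X_i\leq_{st}X_j$ controls only the marginals. I would handle this by conditioning on $R:=\sum_{k\notin\{i,j\}}X_k$, rewriting the event as $\{X_i+X_j\leq u-R\}$ (which is symmetric in the pair $(X_i,X_j)$), and reducing the comparison to the conditional claim
\[ \mathbb{E}\bigl[\Phi(X_i)1\!\!1_{\{X_i+X_j\leq u-R\}}\mid R\bigr]\;\leq\;\mathbb{E}\bigl[\Phi(X_j)1\!\!1_{\{X_i+X_j\leq u-R\}}\mid R\bigr]. \]
This is the delicate step: the symmetry of the event in $(X_i,X_j)$ together with the monotonicity of $\Phi$ should allow the stochastic dominance to be transported through the integral, but the argument genuinely relies on the joint conditional law of $(X_i,X_j)$ given $R$. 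I expect this to be the real substance of the proposition, possibly requiring an implicit exchangeability-type compatibility of $(X_i,X_j)$ with the rest of the vector in addition to the bare marginal ordering $X_i\leq_{st}X_j$.
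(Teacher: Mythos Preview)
Your swap argument is a different route from the paper's. The paper works directly with the first-order optimality condition~(\ref{Optcg}): from
\[ \mathbb{E}\bigl[g'(u_i-X_i)1\!\!1_{\{X_i>u_i\}}1\!\!1_{\{S\leq u\}}\bigr]=\mathbb{E}\bigl[g'(u_j-X_j)1\!\!1_{\{X_j>u_j\}}1\!\!1_{\{S\leq u\}}\bigr], \]
it applies $X_i\leq_{st}X_j$ to the increasing map $x\mapsto -g'(u_i-x)1\!\!1_{\{x>u_i\}}$ (multiplied by $1\!\!1_{\{S\leq u\}}$) to replace $X_i$ by $X_j$ on the left-hand side, and then reads off $u_i\leq u_j$ from the monotonicity of $g'$. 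This sidesteps any comparison of $I$ at two different allocations and is shorter than your approach.

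That said, the gap you isolate is genuine and is \emph{not} actually resolved by the paper's argument either. In both approaches the factor $1\!\!1_{\{S\leq u\}}$ depends on $X_i$ and $X_j$ simultaneously, so the purely marginal ordering $X_i\leq_{st}X_j$ does not by itself justify replacing $X_i$ by $X_j$ inside the expectation. The paper asserts that ``$x\mapsto -g'(-(u_i-x)_+)1\!\!1_{\{S\leq u\}}$ is an increasing function on $\mathbb{R}^+$,'' but this is not a deterministic function of $x$: the indicator is random and coupled to the very variable being compared, so the univariate characterisation of $\leq_{st}$ does not apply as written. Your instinct that an additional structural hypothesis on the joint law of $(X_i,X_j,S^{-\{i,j\}})$ is needed---something like a multivariate stochastic ordering of $(X_i,S^{-i})$ and $(X_j,S^{-j})$, or an exchangeability-type compatibility---is well founded; as stated, neither your argument nor the paper's is complete at this step.
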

   \begin{proof}
     Let $(u_1,\ldots,u_j)$ be the optimal allocation $A_{X_1,\ldots,X_i,\ldots,X_d}(u)$, under Assumption H2, the optimality condition (\ref{Optcg}) is written as follows:
     \[ \mathbb{E}[g^\prime(u_i-X_i)1\!\!1_{\{X_i>u_i\}}1\!\!1_{\{S\leq u\}}]=\mathbb{E}[g^\prime(u_j-X_j)1\!\!1_{\{X_j>u_j\}}1\!\!1_{\{S\leq u\}}] \/.\]
    Now if $X_i\leq_{st}X_j$, and since, $x\mapsto -g^\prime(-(u_i-x)_+)1\!\!1_{\{S\leq u\}}$ is an increasing function on $\mathbb{R}^+$, then:
   \[ \mathbb{E}[g^\prime(u_i-X_j)1\!\!1_{\{X_j>u_j\}}1\!\!1_{\{S\leq u\}}]\leq\mathbb{E}[g^\prime(u_i-X_i)1\!\!1_{\{X_i>u_i\}}1\!\!1_{\{S\leq u\}}] \/.\] 
     We deduce that:
     \[\mathbb{E}[g^\prime(u_i-X_j)1\!\!1_{\{X_j>u_j\}}1\!\!1_{\{S\leq u\}}]\leq \mathbb{E}[g^\prime(u_j-X_j)1\!\!1_{\{X_j>u_j\}}1\!\!1_{\{S\leq u\}}]\/,\]
     and since, $g^\prime$ is an increasing function, and the distributions are all continuous, that implies: $u_j\geq u_i$.
      \end{proof}
      
   By combining all the properties demonstrated in this section, we show that in the case of penalty functions $g_k(x)=|x|~ \forall k\in \{1,\ldots,d\}$, and for continuous random vector $(X_1,\ldots,X_d)$, such that the joint density $f_{(X_k,S)}$ support contains $[0,u]^2$, for at least one $k\in\{1,\ldots,d\}$, the optimal allocation by minimization of the indicators $I$ and $J$ is a symmetric riskless full allocation. It satisfies the properties of comonotonic additivity, positive homogeneity, translation invariance, monotonicity, and continuity.\\ These properties are therefore desirable from an economic point of view, the fact that they are satisfied by the proposed optimal allocation implies that this allocation method may well be used for the economic capital allocation between the different branches of a group, in terms of their actual participation in the overall risk, taking into account both their marginal distributions and their dependency structures with the remaining branches.
  \section{Discussion: What could be the best choice for a capital allocation?}
    
  In section 3, we tried to explain why the optimal allocation can be considered coherent from an economic point of view. Now, the most practical question is to define the best allocation method choice for an insurer.\\
  
  The first goal of Solvency 2 norms is the insurers' protection and the ORSA approach is based on the minimization of risk at both the local and global levels. The classical methods of risk allocation give the weight of each business line in the group risk. The optimal allocation is based on the global risk optimization. From this point of view, the optimal capital allocation seems more in coherence with the ORSA goal. Indeed, the optimal allocation gives a second risk management level, after the solvency capital requirement determination.\\ 
  
  The best allocation method choice depends finally on the risk aversion of the insurer. If the SCR is considered as the only risk management level, the classical methods of risk allocation are sufficient. If the insurer accepts to enhance his security level as it is the ORSA aim, the optimal allocation can be a good practical answer to this need.\\

  Conventional capital allocation methods are based on a chosen univariate risk measure, their properties derive from those of this risk measure. It seems more coherent in a multivariate framework to use directly a multivariate risk indicator, not only for risk measurement, but also for capital allocation.\\
  
  Another important criterion for allocation method choice is the capital nature. The allocation of an investment capital may be different from that of a solvency capital.\\
      
  \section*{Conclusion}
 

In this article, we have shown that the capital allocation method by minimization of multivariate risk indicators can be considered as coherent from an economic point of view. This method also illustrates the importance of the risky business portfolio choice and its impact on the management of the overall company capital.\\
 
  
 
 This method can be developed if one can construct some broader sets of multivariate risk indicators as this is the case for univariate risk measures.\\
 
 Finally, the choice of a capital allocation method remains a complex and crucial exercise because some methods may be better suited to deal with specific issues, others can lead to dangerously wrong financial decisions. In the case of the proposed optimal capital allocation, the risk management is at the heart of the allocation process, and the company can allocate its capital and reduces its overall risk at the same time. Its risk aversion is reflected by the choice of the multivariate risk indicator to minimize. That is why we think that from a risk management point of view, this method can be considered as more flexible.

\bibliographystyle{plain}
\bibliography{pap1}

   \end{document}